\newif\if@restonecol
\newtheorem{prop}{Proposition}
\providecommand{\keywords}[1]{\textbf{\textit{Index terms---}} #1}
\renewcommand\normalsize{%
   \@setfontsize\normalsize\@xpt\@xiipt
   \abovedisplayskip 1\p@ \@plus2\p@ \@minus5\p@
   \abovedisplayshortskip \z@ \@plus3\p@
   \belowdisplayshortskip 6\p@ \@plus3\p@ \@minus3\p@
   \belowdisplayskip \abovedisplayskip
   \let\@listi\@listI}
\begin{document}
\bibliographystyle{IEEEtran}
\graphicspath{{figuresEH/}}
\vspace{-4em}
\title{Efficient Coded Cooperative Networks with Energy Harvesting and Wireless Power Transfer}
%
%
%
%

\author{{Nan Qi, Ming Xiao, Theodoros A. Tsiftsis, Mikael Skoglund, and Huisheng Zhang}
\thanks{
Nan Qi and Huisheng Zhang are with the School of Electronics and Information,  Northwestern Polytechnical University, China (e-mail: naq@kth.se, zhanghusheng@nwpu.edu.cn).

Ming Xiao, Mikael Skoglund are with the School of Electrical Engineering of KTH, Royal Institute of Technology, Stockholm, Sweden (e-mail: \{mingx, skoglund\}@kth.se). Nan Qi is also with KTH.

T. A. Tsiftsis is with the School of Engineering, Nazarbayev University, 010000 Astana, Kazakhstan and with the Department of Electrical Engineering, Technological Educational Institute of Central Greece, 35100 Lamia, Greece (e-mails: theodoros.tsiftsis@nu.edu.kz, tsiftsis@teiste.gr).
}
%
}
\maketitle

\begin{abstract}
The optimum off-line energy management scheme for multi-user multi-relay networks employing energy harvesting and wireless energy transfer is studied. Specifically, the users are capable of harvesting and transferring energy to each other over consecutive transmissions, though they  have no fixed energy supplies. Meanwhile, network coding for the users' messages is conducted at the relays to enable cooperative transmission with source nodes in independent but not necessarily identically distributed (i.n.i.d.)  Nakagami-$m$ fading channels. Therefore, a simultaneous two level cooperation, i.e., information-level  and energy-level cooperation is conducted. The problem of energy efficiency (EE) maximization under constraints of the energy causality and a predefined outage probability threshold is formulated and shown to be non-convex. By exploiting fractional and geometric programming, a convex form-based iterative algorithm is developed to solve the problem efficiently. Close-to-optimal power allocation and energy cooperation policies across consecutive transmissions are found.  Moreover, the effects of relay locations and wireless energy transmission efficiency are investigated and the performance comparison with the current state of solutions demonstrates that the proposed policies can manage the harvested energy more efficiently.

\end{abstract}

\keywords{Convex optimization, energy harvesting and cooperation, energy efficiency, outage probability, Nakagami-$m$ fading, and network coding.}

\IEEEpeerreviewmaketitle
\section{Introduction}
\subsection{Motivation and Related Works}

In wireless sensor networks (WSNs) or wireless body area networks (WBAN), energy harvesting (EH) is a sustainable approach to prolong the network lifetime. Specifically, 
the EH technique enables the nodes to harvest  energy from nature such as solar, wind, and vibration, and refill their energy-constrained batteries. However, the energy arrival is highly dependent on the environment such as the weather and location. On one hand, the variable weather makes the harvested energy at an individual node intermittent available. On the other hand, distinct locations may lead to some energy-deprived nodes as well as energy-abundant nodes. Both facts cause the inefficiency energy usage of the whole network. Fortunately, wireless energy transfer \cite{SWIPT}-\cite{XLPW} provides a possibility to share the harvested energy among the nodes. That is, it offers extra energy supplement to enhance the data transmission in case that energy harvested from nature is not adequate. All these yield the integrated  energy harvesting and transferring (IEHT) techniques.

IEHT netowrks continues to attract considerable research interest recently \cite{SWIPT}-\cite{DLCS}.  Simultaneous wireless information and power transfer (SWIPT)  was widely considered \cite{YLIU}-\cite{MGL}. Two types of SWIPT protocols are 1) power splitting (PA) \cite{YLIU}-\cite{DWKN}, where the receiver splits the received signal into two parts for decoding information and harvesting energy and 2) time-switching (TS) \cite{HJRZ}, \cite{MGL}, where the receiver switches between decoding information and harvesting energy. These two protocols, which allow receivers harvest energy and receive messages from the same radio frequency (RF) signal have been widely applied in Amplify-and-Forward (AF)/Decode-and-Forward(DF)  based cooperative relaying \cite{YLIU}, orthogonal frequency division multiple access (OFDMA) \cite{ZFTS}, \cite{DWKN} and  single-input multiple-output (SIMO), multiple-input  single-output (MISO) \cite{HJRZ} and multiple-input multiple-output (MIMO) \cite{IKSS} setups. With the  Lagrange dual method, the end-to-end rate, sum-rate or EE are maximized by optimizing the power \cite{YLIU}, \cite{ZFTS}, \cite{DWKN} or time fractions \cite{HJRZ} for harvesting energy or decoding information.   

In the above two protocols,  RF  signals are dually exploited for delivering energy as well as transmitting information.  However, their practical implementation  may not be easily  fulfilled currently given the following limitations of the current state-of-the-art of electronic circuits: 1) the operating power of the energy harvesting unit is much higher than that of the information processing units ($-10$dBm for energy harvesters versus $-60$dBm for information receivers \cite{SWIPT}, \cite{DLCS}); 2) the TS policy requires a strict synchronization process and a non-continuous information transmission; 3) the PS policy requires appropriate PS
circuits that increase the complexity and cost of the hardware, and hardware non-idealities can significant efficiency loss of the PS strategy, as explained in \cite{IKSS}. This raises a demand for \emph{non-overlapping energy harvesting, transferring and information transmitting (Non.o.-IEHT)} techniques. That is, wireless energy transfer is maintained by a separate unit and is independent of the energy harvesting and information transmission, as in \cite{finiteba}, \cite{JX}-\cite{DLCS}. The authors in \cite{finiteba}, \cite{JX}-\cite{DLCS} focused on maximizing the throughput for parallel fading subchannels \cite{WQQ}, two-way cahnnels without \cite{finiteba} or with  relays \cite{twreh} and  MISO network settings \cite{KTAY}. The battery storage can be  finite \cite{finiteba} or  infinite \cite{JX}-\cite{BGOO}. The directional water-filling algorithm obtained via the  Lagrange dual method is widely adopted to obtain the closed-form solutions in \cite{finiteba}, \cite{JX}-\cite{KTAY}. In \cite{DLCS}, the weighted sum-rate maximization problem in energy harvesting  analog network coding (ANC) based TWR  has been investigated. By applying the semi-definite relaxation and successive convex optimization, the optimum beamforming vector and transmitting rates have been determined.  

The above literatures assumed that one relay only assisted one transmitter-destination transmission.  However, in the multi-user multi-relay scenarios, the RF signals broadcasted by each user can be received by  multiple relays; on the other hand, one relay may receive multiple signals from different users and then coordinate their signals.  Therefore, without extra power cost, the system performance can be greatly improved by utilizing the potential of the spatial diversity of distributed relays and  the coordination of the source signals at the relays. One particularly effective way to coordinate source signals is to utilize physical layer network coding (NC)  \cite{MGL}, \cite{NQ}-\cite{20}, which inherently poses an information-level cooperation. In the presence of network coding, user messages are linearly combined over Galois field (GF) to enable sources to cooperate and transmit messages simultaneously. The published works \cite{finiteba}, \cite{JX}-\cite{DLCS} considered the network coding that is operated in GF(2). Recent works illustrated that, if the linear combination is performed  over a large finite field, benefits in terms of diversity order or even energy efficiency can be obtained \cite{17}, \cite{NQ}-\cite{20}. Particularly, to achieve the full diversity order for a group of cooperative users, the concept of maximum diversity network coding (MDNC) was proposed in \cite{19} and \cite{20}. It was shown that an $M$-user $N$-relay network based on MDNC can achieve the full diversity order (i.e., $N+1$ and $N-M+1$ in the presence or  absence  of the direct source node-destination channels, respectively).  It was also proved in \cite{19} that MDNC can provide the network with a larger outage capacity than the dynamic network coding and analog network coding in the high SNR region. 

We consider Non.o.-IEHT in two-hop multi-user multi-relay  systems, where network  coding over high Galois field is also performed, thereby creating a simultaneous two level cooperation, i.e., information-     
 and energy-level cooperation. In this way, the potential of energy efficiency (EE) and wireless resources are expected to be fully exploited. However, to the best of our knowledge, very few works studied energy flow management for such network settings. The authors of the published works \cite{MGL}, \cite{19} and \cite{20} were mainly concerned about the diversity order (i.e., the exponent of SNR in the upper bound); however, the above policies may result in a degraded energy efficiency since they only considered the outage probability performance and energy cost was ignored. In our previous work \cite{NQ}, we presented the energy efficient MDNC networks with Rayleigh fading environment, where power allocation and relay selection were jointly adopted. Nevertheless, the networks are consisted of conventional nodes that cannot harvest or transfer energy. The energy was not fully exploited in the sense that it can neither be shared among the users nor optimized across consecutive transmissions. Moreover, the algorithm is not applicable for the more general Nakagami-$m$ fading environment. An  important and pertinent work on energy harvesting coded networks is \cite{MGL}, where the time-switching based energy transferring protocol has been applied and the time fraction (for harvesting energy or decoding information) was optimized to minimize the  network outage probability over one single time slot. However, it assumed that the outage probability was the same for all inter-user channels and the energy-depletion policy was adopted that  did not allow energy accumulation at the nodes. The algorithm in \cite{MGL} is not feasible in the networks where the inter-channels are independent but not necessarily identically distributed (i.n.i.d.). On the other hand, as we will show in Section V-C,  the energy-depletion policy is not optimal for the consecutive transmission scenarios since the energy was not optimized along the time dimension.  It is thus being observed that, for  network-coded systems employing energy harvesting and transferring, the energy efficient energy flow management which allows extra harvested energy to be accumulated and stored in the batteries for its future usage is still an open problem.

\subsection{Contributions}
In this paper, we study the high Galois field network-coded relaying systems with Non.o.-IEHT techniques. For our considered network model, we focus on careful  management of the energy flow and answering the following questions: 1) To maximize the energy efficiency, how much harvested energy at one specific user should be stored for future usage and how much energy should be transferred  to/obtained from other users in every individual  transmission period? and 2) How to allocate the data transmitting power among the cooperative users and relays such that the EE can be maximized? Specifically, our main contributions are listed as below:

\begin{description}

\item[(1)] We respectively derive the outage probability, energy consumption and EE for the networks that are coded over Galois field. The EE maximizing problem satisfying the energy causality and the pre-defined outage probability threshold is exactly formulated. Different from the similar network coding scenario published in \cite{MGL}, \cite{NQ}-\cite{20}, energy accumulation is allowed at the users rather than depleted  over one transmission; meanwhile,  PA and energy cooperation policies are jointly optimized across consecutive transmissions. That is, energy can flow in time from the past to the future, and in space from one user to the other users. Thus, as we will show later, energy causality constraints take a new form and the underline optimization problem will be completely new and different from the state of arts.

\item[(2)] We consider the Nakagami-$m$ channel, which well models various cellular environments, including the non-line-of-sight (NLOS) and line-of-sight (LOS) channels \cite{APGC}. Moreover, for generality purposes, the channels  are assumed to be i.n.i.d. and the path-loss related to the transmission distance is also incorporated. 

\item[(3)] The optimization problem is shown to be NP-hard. The Lagrange dual method widely adopted in \cite{ZFTS}, \cite{DWKN}, \cite{HJRZ}, \cite{finiteba}, \cite{JX}-\cite{DLCS}, however, is not feasible in our network coding scheme. Instead, to efficiently obtain close-to-optimal solutions, the relaxation and approximation methods are exploited. Finally, a convex form-based iterative algorithm is  developed by combining the geometric programming and non-linear fractional programming.

\item[(4)] The tradeoff between the EE and outage probability is derived. Moreover,  energy cooperation and power allocation policy results are illustrated. The EE gains from NC and energy transferring are also analysed. Additionally, the impacts of the relay locations and the wireless energy transmission efficiency are also investigated.

\end{description}

The rest of the paper is organized as follows. In Section II, we present the system model. EE maximization problem formulation is given in Section III. Then, we reformulate the problem and propose an algorithm in Section IV. The analytical and simulation results are presented in Section V. Section VI concludes this paper.

\section{System Model}

We consider a network with $N$ relays and one destination. There are $M$ users in the network which intend to transmit their independent messages to the destination with the assistance of $N$ full-duplex relays, as depicted in Fig.~\ref{EnergyCooperation_MDNC}. All the nodes are all equipped with a single antenna. It is assumed that there is no direct connection between users and the destination due to the long communication distance or the presence of physical obstacles. 

The destination and relays both have fixed power supplies, while the batteries at the users have to be refilled externally or by the energy transferred from other users. 
There are separate units for wireless energy transferring, energy harvesting and information transmitting such that they  are performed independently and concurrently at one user \cite{finiteba}, \cite{BGOO}, \cite{twreh}.

\begin{figure}[h]
\centering
\includegraphics[width=0.4\textwidth]{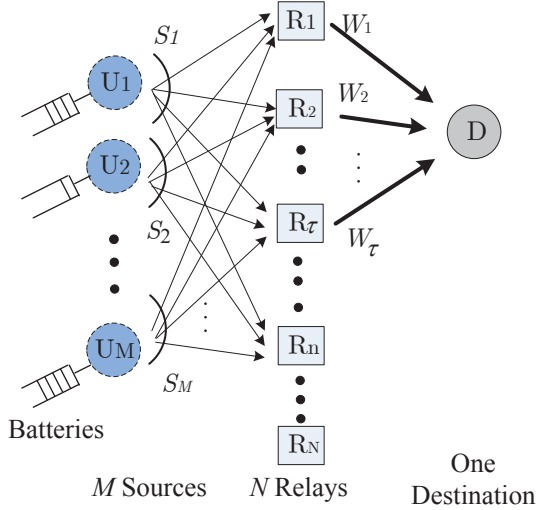}
\caption{Network coded  cooperative transmission.}\label{EnergyCooperation_MDNC}
\end{figure}
 
In what follows, we illustrate the channel model, information transmission, energy harvesting and cooperation models, respectively.

\subsection{Channel Model}

Slowly varying flat fading channels that follow Nakagami-$m$ distribution are considered. We note that  Nakagami-$m$ fading is a general channel model in the sense that variations of the severity of any fading channel can be expressed as Nakagami-$m$ distribution by changing the fading parameter, $m$, from $1/2$ to $+ \infty$ \cite{APGC}. Let $U_i$, $i\in \{1, 2, \cdots, M\}$, and $R_j$, $j\in \{1, 2, \cdots, N\}$, represent the $i$th user and $j$th relay, respectively. Then  $U_i$-$R_j$ channel coefficient is represented as
\begin{equation}
h_{ij}  = |\Upsilon _{ij} |\sqrt {d_{ij}^{ - \beta _{ij} } }\exp (j\varphi _{h_{ij} } ),\label{hijdef}
\end{equation}
where  $h_{{ij }}$ is the channel gain that combines the path-loss and Nakagami-$m$ fading; $|\Upsilon _{ij} |$ is the Nakagami-$m$ fading envelope; $\varphi _{h_{ij} }$ is the phase of the $U_i$-$R_j$ channel that is assumed uniformly distributed over the range of $[0,2\pi)$; $d_{ij }^{ - \beta_{ij }}$ denotes the path loss; ${d_{ij }}$ is the distance and ${\beta_{ij }}$ is the channel path loss exponent.

The probability distribution function (pdf)  of $|\Upsilon _{ij} |$ can be given as \cite{MNakagami}
\begin{equation}
f_{|\Upsilon _{ij}|} (x) = \frac{{2m^m x^{2m - 1} }}{{\Gamma (m)\Omega _{h_{ij} }^m }}\exp ( - \frac{{x^2 }}{{\Omega _{h_{ij} }^{} }}), x > 0,\label{pdfhijk}
\end{equation}
where $\Gamma ( \cdot )$ is the gamma function; ${{\Omega _{h_{ij} }^{} }}$ is the average
channel gain represented as ${{\Omega _{h_{ij} }^{} }}=\mathbb{E}[
{|\Upsilon _{ij} |^2 }]$ and $\mathbb{E}\{\cdot\}$ is the expectation operator.

Similar to $h_{{ij }}$ in  \eqref{hijdef}, the $R_j$-destination channel coefficient, denoted as ${g_{j}}$, also combines the path-loss and Nakagami-$m$ channel fading, i.e.,
\begin{equation}
g_{j}  = |\Upsilon _{j}|\sqrt {d_{j}^{ - \beta _{j} } }\exp (j\varphi _{g_{j} } ),\label{gjdef}
\end{equation}
where  $|\Upsilon _{j}|$, $\varphi _{g_{j} }$, $d_{j}$ and $\beta _{j}$ are parameters for the $R_j$-destination channel and denote  the channel  fading envelope,  channel phase, distance and path loss exponent, respectively. $|\Upsilon _{j}|$ also follows the Nakagami-$m$ distribution. The average
channel gain is represented as ${{\Omega _{g_{j} }^{} }}=\mathbb{E}[
{|\Upsilon _{j}|^2 } ]$. 

We assume that perfect channel state information (CSI) is available at the receivers, while the transmitters only have the knowlege of ${{\Omega _{h_{ij} } }}$ and ${{\Omega _{g_{j} } }}$ ($\forall i, j$). In the  i.n.i.d. fading environment, ${{\Omega _{h_{ij} } }}$ and ${{\Omega _{g_{j} } }}$ ($\forall i, j$) may differ from each other.

\subsection{Information Transmission Scheme}

All nodes operate in time division multiple access (TDMA), which is also adpoted in \cite{MGL} and \cite{17}. Thus, there is no interference among information transmissions. 

As shown in Fig. \ref{EnergyCooperation_MDNC}, the whole transmission consists of two hops.

\begin{description}
  \item[\textit{1)\;The First Hop: User-relay Transmission}]
\end{description}

The message of $U_i$ ($\forall i$) is denoted as $S_i$. Suppose that all user messages are of the same length\footnote{We note that this assumption is made for simplifying illustration. The system model can be  extended to general cases where different users may have different message lengths. {More specifically, if different users have different message lengths, we can divide the messages into shorter ones such that the lengths of shorter messages are the same and some users have more messages while some have fewer messages. Then, the users with fewer messages may not participate  in  all transmission rounds.}}, denoted as $|S|$. Additionally, we assume that all users and relays transmit information with a fixed rate $\alpha_0$ bits per second\footnote{Our model and algorithm are also applicable for different fixed rates on different channels. The rates affect the values of the data transmitting time and  outage probability. However, different rates have no impact in the convexity of $\mathbf{P3}$ presented in Section IV. Hence, the analysis and proposed scheme are still feasible.}.   Take $U_i$ as an example. $S_{i}$ is first protected by channel
coding and then modulated into a unit power-signal,
denoted as $X(S_{i})$. 	Then $X(S_{i})$ is broadcast to all relays, which takes $T=|X(S_{i})|/\alpha_0$ seconds ($|\cdot|$ means the number of bits in $X(S_{i})$). 

$R_j$ receives the signal from $U_i$ as follows:
\begin{equation}
F_{ij}= h_{{ij }}\sqrt { p_{i }} X(S_{i}) + z_{{ij }}, \label{eq1}
\end{equation}
where ${ p_{i }}$  is the transmitting power for the channel codeword at $U_i$; $z_{ij } \sim {\mathcal{N}}(0,N _{0,ij}B)$ denotes the AWGN; $N _{0,ij}$ is the one-sided power spectral density and $B$ is the bandwidth.

The achievable rate for the channel between $U_i$ and $R_j$ is
\begin{equation}
C_{ij }= B{\log _2}(1 + \frac{|{h_{ij }}{|^2}p_{i }}{{N_{0,ij}}B}), \label{eq0}
\end{equation}
where $|{h_{ij }}|$ is the amplitude of ${h_{ij }}$.
An outage event occurs in the $U_i$-$R_j$ channel when the fixed data transmission rate is larger than the Shannon capacity \cite{19},  i.e., 
\begin{equation}
C_{ij } < \alpha_0. \label{outcreterion}
\end{equation}
If no outage event happens in the $U_i$-$R_j$ channel, $R_{j}$  will decode $F_{ij }$ into ${S_{i}}$. In this way, $R_{j}$ tries to obtain all source messages, i.e., $\{S_1, S_2, \cdots, S_M\}$.

\begin{description}
  \item[\textit{2)\;The Second Hop:  Relay-Destination Transmission}]
\end{description}

The following notations will be used in our following description.

${\Theta}$:  the index set of all the  relays.

${{{\Phi _{n}}}}$: Suppose in the first hop, $n$ relays succeed in receiving and decoding all the user messages. Their index set is ${{{\Phi _{n}}}}$. Note that ${n=0}$ means that no relay receives and decodes all the user messages.

$\psi_{\tau}$:  Suppose in the second hop, $\tau$ relays manage  to forward  messages to the destination.  ${\tau}=0$ means no relay forwarding messages to the destination.

Clearly, $\tau \leq n \leq N$ and $\psi_{\tau} \subseteq {{{\Phi _{n}}}} \subseteq {\Theta}$.


If $R_j$ fails to decode any user message, it will not forward messages. Otherwise, if it can decode all user messages,  a network coding scheme based on pre-defined MDNC coding coefficients will be applied. A network codeword ${W_{j}}$ is generated at $R_{j}$ by the linear combination of ${S_{1}},\,{S_{2}},\, \cdots \,\,,{S_{M}}$ over a finite field, i.e.,
\begin{equation}
{W_{j}} = \mathop \boxplus \limits_{i = 1}^M \gamma_
{ij}{{S_{i}} }, \forall i \in \{1,2,\cdots,M\}, \forall j \in {{{\Phi _{n}}}},
 \nonumber 
 \end{equation}  
where ``$\boxplus$" is the addition operation in the finite field; $\gamma_
{ij}$ is the global encoding kernel for ${{S_{i}} }$ at relay $R_{j}$. $\gamma_
{ij}$ constitutes the transfer matrix ${\mathbf{G}_{M \times N}}$ corresponding to MDNC\footnote{Since the design of MDNC encoding and decoding are not our main points, we skip their design details.}
\begin{equation}
{\mathbf{G}_{M\times N}} = \left( {\begin{array}{*{20}{c}}
{{\gamma_
{11}}}&{{\gamma_
{12}}}& \ldots &{{\gamma_
{1N}}}\\
{{\gamma_
{21}}}&{{\gamma_
{22}}}& \ldots &{{\gamma_
{2N}}}\\
{}& \ldots & \ldots &{}\\
{{\gamma_
{M1}}}&{{\gamma_
{M2}}}& \ldots &{{\gamma_
{MN}}}
\end{array}} \right).
\label{Tmatric} 
\end{equation}
${\mathbf{G}_{M \times N}}$  is row full rank \cite{19}. 

Before being forwarded to the destination, ${W_{j }}$ ($\forall j \in {{{\Phi _{n}}}}$) is first protected by channel coding and then modulated into a unit-power signal, denoted as  
$X({W_{j}})$. Note that with network coding, we have $|{W_{j}}|=|{S}|$. Correspondingly, every transmission in the second hop also takes $T$ seconds. 
 
At the destination, the signal from $R_j$  is received  as
\begin{equation}
  F({ W_{j}})={g _{j}}\sqrt {{{p'_{j}}}} X({W_{j}}) + z_{j}, \forall {j} \in {\psi_{\tau }}, \label{eq5}
\end{equation}
where ${{{p'_{j}}}}$ is the transmitting power at relay $R_j$; $z_{j} \sim {\mathcal{N}}(0,{N_{0,j}}B)$ is the noise term; ${N_{0,j}}$ is the power spectral density of noise and ${g_{j}}$ is the channel coefficient.

Finally, the destination obtains  ${S_{1 }}$, ${S_{2 }}$,  $\cdots$, ${S_{M }}$ jointly from $\{F({ W_{j}}), j\in{\psi_\tau}\}$ by network decoding. In effect, the row full rank property of ${\mathbf{G}_{M \times N}}$ guarantees that $\{{S_{1 }}, {S_{2 }},  \cdots, {S_{M }}\}$ can be recovered at the destination as long as 
\begin{equation}
\tau \geq M, \label{taulessM}
\end{equation}
otherwise, none of the user messages can be obtained and we claim an outage event happens for all user message transmissions.

A transmission period is defined as the duration in which all the $M$ users complete one cycle broadcasting in the TDMA scheme, which lasts $MT$ seconds. In total, $K$ consecutive transmission periods are considered. Let $k$ ($k=0, 1, \cdots, K$) represent the index of the transmission period. Take $U_1$ as an example, as illustrated in Fig. \ref{figEnergyarrival}, the $k$th transmission period corresponds to $t \in [(k - 1)MT, kMT)$. The second hop in the $k$th transmission period can simultaneously proceed with the first hop of the $(k+1)$th transmission period. Thus, the duration for one cycle ``first hop + second hop" transmission is equivalently calculated as $MT$ seconds.

\subsection{Energy Harvesting and Cooperation Model}

As shown in Fig. \ref{figEnergyarrival}, in any transmission period, the incoming energy at one user is  either harvested externally  or obtained from other users via wireless energy transfer (e.g., in the $2$nd transmission period, $U_1$ obtains energy from other users). Correspondingly, the user can either consume energy for the data transmission or  transfer energy to the other users (e.g., in the $K$th transmission period, $U_1$ transfers energy to other users). 

\begin{figure}[h]
\centering
\includegraphics[width=0.4\textwidth]{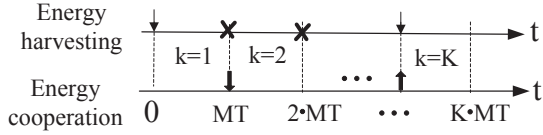}
\caption{Energy harvesting and transfer models for $U_1$. The thin arrow marks that energy is harvested from the external environment while ``$\times$" means no harvested energy. The bold and upward arrow records that  energy is transferred to  other users during the corresponding transmission period, while the bold and downward arrow represents that energy is provided by other users.}\label{figEnergyarrival}
\end{figure}

In our energy harvesting model, we consider the off-line policy \cite{twreh}, in the sense that the harvested energy amount and harvesting time are known (or can be precisely predicted) to all nodes in advance. We assume that $U_i$ ($\forall i$) harvests $Eu_{i,k}$ ($\forall k$) joules of energy from the external environment when $t \in [(k - 2)MT,(k - 1)MT)$. Obviously, $Eu_{i,k}$ can be consumed in the $k$th and later transmission periods. Note the users do not necessarily harvest energy at the same time. That is, $Eu_{i,k}$ ($\forall i$, $\forall k$) can be zero. 

We further describe the energy transfer morel. Suppose that in the $k$th transmission period, $U_i$ transfers ${{E_{i\to i', k}}}$ joules energy to ${U_{i'}}$. We define an energy transfer matrix ${\mathbf{E}_{I \to I',k}}$ ($\forall k$), as
\vspace{+0.5em}
\begin{equation}
{\mathbf{E}_{I \to I',k}} = \left( {\begin{array}{*{20}{c}}
0&{{E_{1\to2, k}}}& \ldots &{{E_{1\to M, k}}}\\
{{E_{2\to1, k}}}&0& \ldots &{{E_{2\to M, k}}}\\
{}& \ldots & \ldots &{}\\
{{E_{1\to M, k}}}&{{E_{2\to M, k}}}& \ldots &0
\end{array}} \right).
\nonumber 
\vspace{+0.7em}
\end{equation}
Note that all the diagonal elements of ${\mathbf{E}_{I \to I',k}}$ are zero, since one user does not transfer energy to itself. The set of ${\mathbf{E}_{I \to I',k}}$ ($\forall k$) is denoted as ${\mathbf{E}_{I \to I'}}$.

The wireless energy transmission efficiency is denoted as  $\eta < 1$. Then, $U_i$ receives $\eta \sum_{i' = 1}^M {E_{i' \to i,{k}}} $ joules energy  from the other cooperative source nodes during the $k$th transmission period. Correspondingly, $U_i$ totally transfers $\sum_{i' = 1 }^M {E_{i\to i',{k}}}$ joules energy to other users.  We claim that either $\eta \sum_{i' = 1}^M {E_{i' \to i,{k}}} $ or $\sum_{i' = 1 }^M {E_{i\to i',{k}}}$  must be zero. This is because if one user simultaneously obtains energy from other users and transfers energy to other users,  unnecessary energy loss will happen due to the wireless energy transmission inefficiency.

The energy evolution process is depicted as below. Let ${Eu_{i,k}^{ava}}$ denote the available energy for \emph{data transmission} in the $k$th transmission period. The available energy for the $1$st transmission period is described as
\vspace{+0.5em}
\begin{align}
Eu_{i,1}^{ava} =  E{u_{i,0}}+ E{u_{i,1}}+ \eta \sum\limits_{i' = 1}^M {E_{i' \to i,1}} - \sum\limits_{i' = 1 }^M {E_{i \to i',1}},\forall i,
\label{1EnuncS}
\end{align}
where $Eu_{i,0}$ stands for the initial energy storage, which is set as $0$ without loss of generality;   
 $E{u_{i,1}}$ is the energy harvested externally before the first transmission period.

For $k\geq 2$,  ${Eu_{i,k}^{ava}}$ evolves as
\vspace{+0.5em}
\begin{align}
Eu_{i,k }^{ava}=&Eu_{i,k-1}^{ava}- p_{i,{k-1}}T +E{u_{i,k }}  \hfill\nonumber\\
&+\eta \sum\limits_{i' = 1}^M {E_{i' \to i,{k}}}- \sum\limits_{i' = 1 }^M {E_{i \to i',{k}}} , \label{EuncS}\hfill
\end{align}
where $\eta \sum_{i' = 1}^M {E_{i' \to i,{k}}} $ is the possibly obtained energy  from the other cooperative sources during the $k$th transmission period while $\sum_{i' = 1 }^M {E_{i \to i',{k}}}$ indicates the energy transferred to other users  and $p_{i,{k-1}}$ denotes  the  data transmission power during the $(k-1)$th transmission period, respectively.

Without loss of generality, we assume that ${d_{ij }}$, $\Omega _{{h_{ij }}}^2$, ${ \beta_{ij }}$ and  $N _{0,ij}$ and their corresponding parameters in the second hop are the same in all $K$ transmission periods. Thus, we can drop the index $k$ in the above mentioned parameters to ease the following notations.

\section{EE Maximization Problem}

In this section, we respectively formulate the expressions for the EE, total consumed energy and outage probability. Following that, the EE maximization problem is finally presented.

\subsection{Energy Efficiency}

EE is evaluated as the expected number of successfully transmitted information bits, $\mathbb{E}[L]$, divided by the total consumed energy ${E_{tot}}$, i.e., 
\begin{equation}
U_{EE}=\frac {\mathbb{E}[L]}{{{E_{tot}}}}.\label{EEMDNC}
\end{equation}
\vspace{+0.3em}

As we have illustrated in \eqref{taulessM},  the network decoder at the destination either recovers all source messages or cannot decode any of them, the outage probabilities of all users are the same. Let   
 ${{\Pr}_{out,k}}$ denote the outage probability for all the users in the $k$th transmission period. Thus,  over $K$ consecutive transmission periods,  $\mathbb{E}[L]$ can be expressed as \cite{NQ}
\vspace{+0.5em}
\begin{equation}
\mathbb{E}[L]=\sum\limits_{k = 1}^K{M{{\alpha_0}T{(1-{{\Pr}_{out,k}}) }}}.
\end{equation}

In the following, we give the expressions for ${{\Pr}_{out,k}}$, ${{{E_{tot}}}}$ and their corresponding constraints in different transmission periods, respectively.

\subsection{Total Consumed Energy}

The total consumed energy includes the energy used for data transmission and wasted during the energy cooperation, which is denoted by
\vspace{+0.5em}
\begin{equation}
{E_{tot}}= \sum\limits_{k = 1}^K\left( {\sum\limits_{i = 1}^M {{p_{i,k}T} + (1 - \eta )\sum\limits_{i = 1}^M {\sum\limits_{i' = 1 }^M {{E_{i \to i',k}}}  + } \sum\limits_{j = 1}^N {p'_{j,k}} T} } \right),\label{Etot}
\end{equation}
where the term $(1 - \eta )\sum_{i = 1}^M(\cdot)$ in \eqref{Etot} represents the overall  energy loss incurred by the wireless energy transmission inefficiency during the wireless energy transfer.

In our model, an infinite-sized battery capacity  at the user is assumed, which has also been adopted in \cite{JX}, \cite{WQQ}. Specifically, a super-capacitor can be implied to store the incoming energy. Since the energy that has not yet arrived cannot be consumed ahead of time due to the energy causality, it is required that the consumed energy amount cannot exceed the available amount. Correspondingly, we formulate the power control constraints as
\vspace{+0.5em}
\begin{align}
&0 < p_{i,k}T\leq {Eu_{i,k}^{ava}}, \forall i, k, \label{const_ps}\hfill\\
&0 < p_{i,k}\leq  p_{max}, \forall i, k, \label{const_ps2}\hfill\\
&0 \leq p'_{j,k} \leq p_{max}, \forall j, k, \label{const_pr}
\end{align}
\vspace{+0.5em}
where $p_{max}$ is the maximum transmitting power.

\subsection{Outage Probability}

We  first  give the exact and approximated outage probability expression for one individual channel,  based on which the outage probability of the whole network is derived.

\vspace{+0.3em}
\begin{description}
  \item[\textit{1)\;Outage Probability of One Individual Channel}]
\end{description}
 \vspace{+0.3em}
 
For the Nakagami-$m$ fading channel, take the $U_i$-$R_j$ channel as an example, the outage probability  can be calculated according to \eqref{eq0} and \eqref{outcreterion}. Specifically, we have 
\vspace{+0.5em}
\begin{align}
 {{\Pr}_{e,{ij,k}}}&={\Pr}\{ C_{ij}  < {\alpha_0}\}={\Pr}\{ |{h_{ij}|^2} < \frac {({2^{{{\alpha_0} \mathord{\left/
 {\vphantom {R B}} \right.
 \kern-\nulldelimiterspace} B}}} - 1){N_{0,ij}}B}{ p_i}\}  \hfill\nonumber\\
 &= 1 - \int_\frac {({2^{{{\alpha_0} \mathord{\left/
 {\vphantom {R B}} \right.
 \kern-\nulldelimiterspace} B}}} - 1){N_{0,ij}}B}{ p_i}^{+\infty} f(x_{ij}){d{x_{ij}}}\hfill\nonumber\\
&= \Gamma \left( {m,\frac{{m(2^{\alpha _0 /B}  - 1)N_{0,ij} B}}{{d_{ij}^{ - \beta _{ij} } \Omega _{h_{ij} } p_{i,k} }}} \right) \Gamma (m)^{ - 1},  \label{nakagaPeijk}
\end{align}
where $\Gamma (a,b) = \int_0^{ b } {x^{a - 1} } \exp ( - x)dx$ is the upper incomplete gamma function and $\Gamma (a) = \int_0^{ + \infty } {x^{a - 1} } \exp ( - x)dx$ is the complete gamma function.

The outage probability of one individual channel in \eqref{nakagaPeijk} is  not tractable mathematically since $p_{i,k}$ is not isolated but contained in the gamma function. However, the incomplete gamma function can be well approximated as \cite{ZWGG}
\begin{equation}
 \Gamma (a,b) \approx  (1/a)b^a \nonumber
\end{equation}
for small $b$. This approximation offers one method in isolating $p_{i,k}$ from the gamma function. Specifically, \eqref{nakagaPeijk} can be approximated as
\begin{align}
 {{\Pr}_{e,{ij,k}}}
 & \approx \left( {\frac{{m(2^{\alpha _0 /B}  - 1)N_{0,ij} B}}{{d_{ij}^{ - \beta _{ij} } \Omega _{h_{ij} } p_{i,k} }}} \right)^m \Gamma (m + 1)^{ - 1}  \nonumber\\
 & = c_{ij} p_{i,k}^{ - m},  \label{nakagaPeijk_appro}
\end{align}
where 
\begin{equation}
 c_{ij} = \left( {\frac{{m(2^{\alpha _0 /B}  - 1)N_{0,ij} B}}{{d_{ij}^{ - \beta _{ij} } \Omega _{h_{ij} } }}} \right)^m \Gamma (m + 1)^{ - 1}>0.\label{c_ij} 
\end{equation}

Similarly, the outage probability of the $R_j$-D channel can be given as
\begin{equation}
 {{\Pr}_{e,j,k}} \approx  c_{j} {(p'_{j,k})}^{ - m},  \label{nakagaPejk_appro}
\end{equation}
where 
\begin{equation}
c_j  =\left( {\frac{{m(2^{\alpha _0 /B}  - 1)N_{0,j} B}}{{d_j^{ - \beta _j } \Omega _{g_j } }}} \right)^m \Gamma (m + 1)^{ - 1} >0.  \label{c_j}
\end{equation}

As can be seen from \eqref{c_ij} and \eqref{c_j}, $c_{ij}$ and $c_{j}$ combine all the channel paramenters.  Increasing $c_{ij}$ or $c_{j}$ will lead to larger outage probability of one indiviual channel. Thus, larger $c_{ij}$ and $c_{j}$ represent worse channel conditions.

\vspace{+0.3em}
\begin{description}
  \item[\textit{2)\;Outage Probability of the Whole Network}]
\end{description}
\vspace{+0.3em}
 
As we illustrated in \eqref{taulessM}, an outage event happens when $\tau \leq M$. In the following, we focus on deriving the probability that $\tau \leq M$.

Suppose in the $k$th transmission period, $n$ relays succeed in receiving all the source messages.  An outage event happens in the following two cases in terms of $n$. In case ${A_k}$, {$n< M$}.  User messages cannot be recovered no matter how the second hop proceeds. In case $B_k$, $n \geq M$. An outage event  happens when the number of relays forwarding the codewords to the BS in the second hop is smaller than $M$. 

We  denote the probability that case ${A_k}$ and ${B_k}$ repectively happening as $\Pr\{A_k\}$ and $\Pr\{B_k\}$. Since cases ${A_k}$ and ${B_k}$ are independent, then the outage probability for the  whole network can be calculated as
\begin{equation}
{\Pr} _{out,k}=\Pr\{A_k\}+\Pr\{B_k\}. \label{completePout}
\end{equation}  
 
We have respectively  formulated $\Pr\{A_k\}$ and $\Pr\{B_k\}$ as \eqref{exactpoutA} and \eqref{exactPoutB} in \cite{NQ}, where the nodes are not capable of  harvesting or transferring energy  and the channels follow Rayleigh fading. 
\begin{equation}
\Pr\{A_k\} =\sum\limits_{n = 0}^{M - 1} \sum\nolimits_{{\Phi _{n,k}}} {\left( {\prod\limits_{{j} \in {\Phi _{n,k}}} {{\rho _{j,k}}} {\prod _{{j} \in {\Theta}\backslash {\Phi _{n,k}}}}(1 - {\rho _{j,k}})} \right)}. \label{exactpoutA}
\end{equation}

\begin{figure*}[ht]
\begin{small}
 \begin{flalign}
& \Pr\{B_k\}= \sum\limits_{n = M}^{N} \left({\sum\nolimits_{{\Phi _{n,k}}} {( {\prod\limits_{{j} \in {\Phi _{n,k}}} {{\rho _{j,k}}} {\prod _{{j} \in {\Theta}\backslash {\Phi _{n,k}}}}(1 - {\rho _{j,k}})} )} \cdot \sum\limits_{\tau  = 0}^{M - 1} \sum\nolimits_{{\psi _{\tau,k} }} {( {\prod\limits_{{j} \in {\psi_{\tau,k} }} {(1 - {{\Pr }_{e,j,k}})} {\prod _{{j} \in {\Phi _{n,k}}\backslash {\psi _{\tau,k} }}}{{\Pr}_{e,j,k}}} )}}\right).\label{exactPoutB}&
\end{flalign}
\end{small}
\hrulefill
\end{figure*}

We note that  $\sum\nolimits_{{\Phi _{n,k}}} {(\beta)}$ in \eqref{exactpoutA} and \eqref{exactPoutB} represents the sum of $\beta$ when ${{\Phi _{n,k}}}$ is in different cases. ${{\Phi _{n,k}}}$ consists of $n$ relays randomly chosen from  $N$ relays in the $k$th transmission period, including $C_N^n$ cases. ${{\psi _{\tau,k} }}$ consists of $\tau$ relays randomly chosen from  $n$ relays in the $k$th transmission period, including $C_n^{\tau}$ cases. In \eqref{completePout} and \eqref{exactpoutA}, ${\rho _{j,k}}$ measures the probability that $R_j$ manages to receive all the $M$ user messages in the $k$th transmission period. It is evaluated by \cite{NQ}
\begin{equation}
 {\rho _{j,k}} = \prod\limits_{i = 1}^M {(1 - {{\Pr}_{e,ijk}})}.\label{ro}
\end{equation}

For the energy harvesting and cooperation scenario, where the channels follow Nakagami-$m$ fading distributions, we can obtain ${{\Pr}_{out,k}}$ by substituting ${\Pr}_{e,ij,k}$ and ${\Pr}_{e,j,k}$  into \eqref{completePout}-\eqref{exactPoutB}.  

Our objective is to maximize the EE across $K$ transmission periods by jointly optimizing $p_{i,k}$, $p'_{j,k}$ and ${{E_{i \to i',k}}}$ ($\forall i, j, k$) according to the harvested energy and the channel parameters, including $Eu_{i,k}$, $d_{ij }$, $\beta_{ij }$, $\Omega _{{h_{ij }}}$, $d_{j }$, $\beta_{j }$ and $\Omega _{{g_{j }}}$ ($\forall i, j, k$). The optimization problem can be formulated as
\begin{align}
& \mathbf{P1}: \mathop {\max }\;U_{EE} \nonumber\hfill\\
&s.t.\;\;  \eqref{const_ps}-\eqref{const_pr},\nonumber \hfill\\
&\quad\;\; {{\Pr}_{out,k}}\leq {{\Pr}_{out,0}}, \forall k,\label{Poutconstr}
\end{align}
where ${{\Pr}_{out,0}}$ is the predefined outage probability threshold for every transmission period. We note that the outage probability threshold may vary in different transmission periods. To ease the notations, we set them as the same value, denoted as ${{\Pr}_{out,0}}$.

\section{Problem Transformation and Solving}

The key challengings in solving $\mathbf{P1}$ stem from the following facts.

Although $\Pr_{out}$ in $\mathbf{P1}$ represents the exact outage probability, it
consists of multiple exponential items. Note the coefficients
of exponential items are positive and negative constants that
alternately appear. This makes the outage probability constraint and the objective function in $\mathbf{P1}$ neither in their concave nor convex forms.  

On the other hand, the widely adopted Lagrange duality method \cite{24} in \cite{ZFTS}, \cite{DWKN}, \cite{HJRZ}, \cite{finiteba}, \cite{JX}-\cite{KTAY} is not applicable in our network coding scenario since the product forms of $p_{i,k}$ ($\forall i, k$)  and $p'_{j,k}$ ($\forall j$)  make the equations obtained via KKT conditions very complicated to be solved. Hereby, the closed-form solutions are hard to achieve. The Brute-force algorithm is also infeasible even for small $M$ and $N$. That is because for the network with $M$ users and $N$ relays, there are $Num_{Var}=(M-1)(M-1)K+MK+NK$
variables to be determined in total, including $(M-1)(M-1)K$ energy transferring variables, $MK$  power allocation variables of $M$ users and $NK$ power allocation variables of $N$ relays. 

In the sequel,  we exploit the relaxation and approximation methods, which alleviate the optimization difficulties substantially. First, the outage probability is converted into its geometric programming form. Then, we covert the objective function and energy causality constraints into their convex forms, thereby finally converting the primal optimization problem into a standard convex one. The details are given in the following.

\subsection{Transformation of the Outage Probability}

In the case of low outage probability threshold ${{\Pr}_{out,0}}$, both ${{c_{ij}}/p_{i,k}^m }$ and $c_j /(p'_{j,k} )^m$ are required to be small according to \eqref{completePout}-\eqref{exactPoutB}. The outage probability constraint can be satisfied  if the transmitting power is allocated appropriately and  $c_{ij}$ and $c_j$ (specifically, $\alpha_0/B$ and  noise power) are small. With small values of ${{c_{ij}}/p_{i,k}^m }$ and $c_j /(p'_{j,k} )^m$, we can derive the following approximations
\begin{equation}
1 - {\Pr}_{e,ij,k}  = 1 - c_{ij} p_{i,k}^{ - m}  \approx \exp ( - c_{ij} p_{i,k}^{ - m} )\sim 1,\label{1_Pr}
\end{equation}
and
\begin{equation}
{\rho _{j,k}}\mathop \sim \limits^{\eqref{ro}} 1. \label{Poutapp}
\end{equation}

Moreover, since $\mathop {\lim }\limits_{x \to 0} \;  1 - \exp ( - x) = x$, we have
\begin{align}
 1 - \rho _{j,k}  &= 1 - \prod\limits_{i = 1}^M {(1 - {\Pr}_{e,ij,k})} \nonumber \\ 
  &\mathop  = \limits^{\eqref{1_Pr}} 1 - \prod\limits_{i = 1}^M {\exp ( - c_{ij} p_{i,k}^{ - m} )}  \nonumber\\ 
  & \approx \sum\limits_{i = 1}^M {c_{ij} p_{i,k}^{ - m} }.\label{noout1}
\vspace{+0.7em}
\end{align}

By substituting \eqref{1_Pr}-\eqref{noout1} into \eqref{exactpoutA} and \eqref{exactPoutB}, we obtain the tight  approximations for $\Pr\{A_k\}$ and  $\Pr\{B_k\}$, which are repectively given in \eqref{appropoutA1} and \eqref{eq56}.
\begin{equation}
\Pr\{A_k\} \approx \sum\limits_{n = 0}^{M - 1} {\sum\nolimits_{{\Phi _{n,k}}} {\left( {\prod\limits_{{j} \in {\Theta}\backslash {\Phi _{n,k}}} {\sum\limits_{i = 1}^M {\frac{{{c_{ij}}}}{{{p_{i,k}^m}}}} } } \right)} } \label{appropoutA1}
\end{equation}

\begin{figure*}[ht]
\begin{small}
 \begin{flalign}
 &\Pr\{B_k\}\approx   \sum\limits_{n = M}^N {\left( {(\sum\nolimits_{{\Phi _{n,k}}} {(\prod\limits_{{j} \in {\Theta}\backslash {\Phi _{n,k}}} {(\sum\limits_{i = 1}^M {\frac{{{c_{ij}}}}{{{p_{i,k}^m}}}} } )} )) \cdot (\sum\limits_{\tau  = 0}^{M - 1} {\sum\nolimits_{{\Phi _{n,k}}} {(\prod\limits_{{j} \in {\Phi _{n,k}}\backslash {\psi _\tau }} {\frac{{{c_j}}} {(p'_{j,k} )^m }  } } ))} } \right)}.
 & \label{eq56}
\end{flalign}
\end{small}
\end{figure*}

Furthermore, we introduce two new variables, i.e., ${{\tilde p}_{i,k}}$ and ${{\tilde p'}_{j,k}}$ as below
\begin{equation}
{p_{i,k}} = {e^{{{\tilde p}_{i,k}}}},\quad \; { {{p'_{j,k}}}} ={e^{ {{\tilde p'}_{j,k}}}}.\label{eq489}
\end{equation}
By substituting ${{{\tilde p}_{i,k}}}$ and ${{\tilde p'}_{j,k}}$ into \eqref{appropoutA1} and \eqref{eq56}, we have \eqref{finalappropoutA} and \eqref{eq222}. 

\begin{equation}
\Pr\{A_k\} \approx \sum\limits_{n = 0}^{M - 1} {\sum\nolimits_{{\Phi _{n,k}}} {\left( {\prod\limits_{{j} \in {\Theta}\backslash {\Phi _{n,k}}} {\sum\limits_{i = 1}^M {{c_{ij}}{e^{ - m{{\tilde p}_{i,k}}}}} } } \right)} }. \label{finalappropoutA}
\end{equation}
\vspace{+0.3em}

\begin{figure*}[ht]
\begin{small}
 \begin{flalign}
&\Pr\{B_k\}\approx \sum\limits_{n = M}^N {\left( {(\sum\nolimits_{{\Phi _{n,k}}} {(\prod\limits_{{j} \in {\Theta}\backslash {\Phi _{n,k}}} {(\sum\limits_{i = 1}^M {{{c_{ij}}e^{ - m{{\tilde p}_{i,k}}}}} } )} )) \cdot (\sum\limits_{\tau  = 0}^{M - 1} {\sum\nolimits_{{\Phi _{n,k}}} {(\prod\limits_{{j} \in {\Phi _{n,k}}\backslash {\psi _\tau }} {{e^{ - m{{\tilde p'}_{j,k}}}}} } ))} } \right)}. & \label{eq222}
\end{flalign}
\end{small}
\hrulefill
\end{figure*}

As can be seen,  both  \eqref{appropoutA1} and \eqref{eq56} are given in the geometric programming forms of $p_{i,k}$ and $p'_{j,k}$ while \eqref{finalappropoutA} and \eqref{eq222} are in the sum-exponential forms of ${{{\tilde p}_{i,k}}}$ and ${{\tilde p'}_{j,k}}$. With  all the coefficients being positive,  ${{\Pr}_{out,k}}$ is finally approximated to its convex form. Then the objective function can be    
dealt with parametric method based on the fractional programming theory.

\subsection{Transformation of the Objective Function}

In the sequel, we apply Dinkelbach's method  \cite{OFDM_nonl} to transform the fractional problem into its subtractive form. The following proposition is provided.
\begin{prop}  \label{Theorem1}
The PA and energy cooperation  policies can achieve the maximum energy efficiency \begin{equation}
q^*=\max \{{U_{EE}}\},\nonumber 
\end{equation}
if and only if
\begin{align}
V( q^ *, \mathcal{\tilde P}^*, {\mathbf{E}_{I \to I'}^*}) &= \max_{q\geq0} \{{M{\alpha _0}T\sum\limits_{k = 1}^K (1-{{\Pr}_{out,k}}) } - { q }{E_{tot}})\} \nonumber\\
&= 0, \label{non}
\end{align}
where $q^*$ is the maximum EE, ${\mathbf{E}_{I \to I'}^*}$ is the optimal energy cooperation policy, $\mathcal{P}^*$ is the set of optimum solutions of ${p_{i,k}}$,  ${p'_{j,k}}$ ($\forall i, j, k$), while $\mathcal{\tilde P}^*$ is the set of optimum solutions of ${{\tilde p}_{i,k}}$,  ${{\tilde p'}_{j,k}}$ ($\forall i, j, k$).
\end{prop}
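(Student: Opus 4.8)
The plan is to recognize Proposition~\ref{Theorem1} as a direct instance of Dinkelbach's parametric characterization of a nonlinear fractional program, and to verify that the hypotheses of that classical result hold for our objective. Writing $\mathbb{E}[L] = M\alpha_0 T\sum_{k=1}^K(1-\Pr_{out,k})$ for the numerator and $E_{tot}$ for the denominator, the energy efficiency is $U_{EE} = \mathbb{E}[L]/E_{tot}$, maximized over the feasible set $\mathcal{F}$ defined by the constraints \eqref{const_ps}--\eqref{const_pr} and \eqref{Poutconstr} on the variables $p_{i,k}$, $p'_{j,k}$, $E_{i\to i',k}$. First I would observe that $E_{tot}>0$ on $\mathcal{F}$ because $p_{i,k}T>0$ strictly by \eqref{const_ps}, that $\mathbb{E}[L]\geq 0$ since $\Pr_{out,k}\le 1$, and that $\mathcal{F}$ is nonempty and the quantities are continuous, so $q^*=\max U_{EE}$ is attained and finite with $q^*\ge 0$.

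Next I would set up the auxiliary function $V(q) = \max_{(\mathcal{\tilde P},\mathbf{E}_{I\to I'})\in\mathcal{F}}\{\mathbb{E}[L] - q\,E_{tot}\}$ and establish the two implications separately. For the ``only if'' direction, suppose $(\mathcal{\tilde P}^*,\mathbf{E}_{I\to I'}^*)$ attains $q^* = \mathbb{E}[L]^*/E_{tot}^*$; then for every feasible point $\mathbb{E}[L] - q^*E_{tot} \le 0$ because $\mathbb{E}[L]/E_{tot}\le q^*$ and $E_{tot}>0$, while at the optimizer itself $\mathbb{E}[L]^* - q^*E_{tot}^* = 0$, so $V(q^*)=0$ and the maximizer of $V(q^*)$ is exactly $(\mathcal{\tilde P}^*,\mathbf{E}_{I\to I'}^*)$. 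For the ``if'' direction, suppose $V(q^*)=0$, attained at some feasible $(\mathcal{\tilde P}^*,\mathbf{E}_{I\to I'}^*)$; then $\mathbb{E}[L]^* - q^*E_{tot}^* = 0$ gives $\mathbb{E}[L]^*/E_{tot}^* = q^*$, and for any other feasible point $\mathbb{E}[L] - q^*E_{tot}\le 0$ gives $\mathbb{E}[L]/E_{tot}\le q^*$, so $q^*$ is indeed the maximal energy efficiency and $(\mathcal{\tilde P}^*,\mathbf{E}_{I\to I'}^*)$ achieves it. I would also note in passing the standard monotonicity/strict-decrease property of $q\mapsto V(q)$ (it is continuous, strictly decreasing, and has a unique zero), which justifies the outer iteration $q\gets\mathbb{E}[L]/E_{tot}$ used by the algorithm, even though only the characterization at $q^*$ is asserted in the statement.

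I do not expect any single step to be a serious obstacle, since this is the textbook Dinkelbach argument; the only genuine points requiring care are (i) confirming $E_{tot}$ is bounded away from zero on $\mathcal{F}$ so that the division $\mathbb{E}[L]/E_{tot}$ is well behaved and the sign manipulations $\mathbb{E}[L]-qE_{tot}\le 0 \iff \mathbb{E}[L]/E_{tot}\le q$ are valid, which follows from the strict inequality $p_{i,k}>0$ in \eqref{const_ps}, and (ii) being explicit that the feasible set over which $V(q)$ is maximized is the same set $\mathcal{F}$ over which $U_{EE}$ is maximized, written equivalently in the transformed variables $\tilde p_{i,k}=\ln p_{i,k}$, $\tilde p'_{j,k}=\ln p'_{j,k}$ via \eqref{eq489}. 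Once these are in place, the equivalence in \eqref{non} is immediate, and the remaining work (showing $V(q)$ for fixed $q$ is a convex problem after the geometric-programming reformulation of $\Pr_{out,k}$) is deferred to the subsequent subsections rather than to this proposition.
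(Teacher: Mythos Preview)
Your proposal is correct and matches the paper's approach: the paper does not give an explicit proof of Proposition~\ref{Theorem1} at all, but simply invokes Dinkelbach's method by citation, so your write-up of the standard Dinkelbach equivalence (positivity of $E_{tot}$ from \eqref{const_ps}, then the two implications via $\mathbb{E}[L]-q^*E_{tot}\le 0$ on $\mathcal{F}$ with equality at the optimizer) is exactly the argument the citation stands in for. Your added remarks on the strict monotonicity of $q\mapsto V(q)$ and the identification of feasible sets under the change of variables \eqref{eq489} go slightly beyond what the paper states but are helpful and accurate.
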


According to Proposition \ref{Theorem1}, we reformulate $\mathbf{P1}$  as $\mathbf{P2}$.
\begin{align}
&\mathbf{P2}: \max V ={{M{\alpha _0}T\sum\limits_{k = 1}^K (1-{{\Pr}_{out,k}}) }} - {q}{E_{tot}} \label{con_objective}\\
&s.t.\;\;  \eqref{Poutconstr},\nonumber\hfill\\
&\quad\;\;   0< p_{i, k}T=e^{{{\tilde p}_{i, k}}}T\leq {Eu_{i,k}^{ava}}, \forall i, k,\label{eq63eu} \hfill\\
&\quad\;\;   0< p_{i, k}=e^{{{\tilde p}_{i, k}}}\leq  p_{max}, \forall i, k,\label{eq63} \hfill\\
&\quad\;\;   0\leq  {{p'_{j, k}}}={e^{ {{\tilde p'}_{j, k}}}} \leq p_{max}, \forall j, k.\label{eq631a}
\end{align}

We focus on  finding $q^ *$,  $\mathcal{\tilde P}^*$ and ${\mathbf{E}_{I \to I'}^*}$ to satisfy
\begin{equation}
\max \{V(q^ *, \mathcal{\tilde P}^*, {\mathbf{E}_{I \to I'}^*})\}=0. 
\end{equation}
In  Dinkelbach's method, $q$ is iteratively updated in every iteration; meanwhile,  $V(q^ *, \mathcal{\tilde P}^*, {\mathbf{E}_{I \to I'}^*})$ is judged whether it converges to a given tolerance. If not, $q$ is updated and we repeat the maximization problem until it converges or reaches the maximal iterations.

Note that with given $q$ in every iteration, we have
 \begin{equation}
 \max \;V \Leftrightarrow \min \; V'=-V+MK{\alpha_0}{T},\label{Vdef}
\end{equation}

Correspondingly, $\mathbf{P2}$ is equivalently transformed into
$\mathbf{P3}$ as shown below,
\begin{align}{}
&\mathbf{P3}:  \min\;V'( q^ *, \mathcal{\tilde P}^*, {\mathbf{E}_{I \to I'}^*})  \nonumber\\
& s.t.\quad\eqref{Poutconstr}, \eqref{eq63eu}-\eqref{eq631a}. \nonumber
\end{align}

According to \eqref{non} and \eqref{Vdef}, the optimum solution of $\mathbf{P3}$ must satisfy
\begin{equation}
\min \{V'( q^ *, \mathcal{\tilde P}^*, {\mathbf{E}_{I \to I'}^*})\} =MK{\alpha_0}T. \label{zeromaxi}
\end{equation}

We first provide a proposition for $\mathbf{P3}$.
\begin{prop}  \label{Theorem2}
Given $q$, $\mathbf{P3}$ is  jointly convex   with respect to (\emph{w.r.t})  ${{{\tilde p}_{i,k}}}$, ${{{\tilde p'}_{j,k}}}$ and ${E_{i \to i',k}}$, $\forall i, i', j, k$. Efficient interior-point method can be applied to obtain its optimum solution.
\end{prop}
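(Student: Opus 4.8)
The plan is to establish convexity of $\mathbf{P3}$ by checking, in turn, the objective function and each constraint, exploiting the standard fact that a problem is convex if the objective to be minimized is convex and the feasible set (intersection of sublevel sets of convex functions and affine sets) is convex.

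First I would analyze the objective $V' = -V + MK\alpha_0 T$. Up to the additive constant and the (held-fixed) scalar $q$, we have $V' = M\alpha_0 T\sum_{k}\Pr_{out,k} + q\,E_{tot} + \text{const}$. For the outage term: after the transformation \eqref{eq489}, each $\Pr_{out,k}$ is approximated by \eqref{finalappropoutA} and \eqref{eq222}, which are nonnegative sums of terms of the form $\prod e^{-m\tilde p_{i,k}}$ times $\prod e^{-m\tilde p'_{j,k}}$ with positive coefficients $c_{ij}>0$; each such term is $\exp$ of an affine function of the $\tilde p$-variables, hence log-convex and a fortiori convex, and a nonnegative weighted sum of convex functions is convex. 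For the energy term, $E_{tot}$ in \eqref{Etot} is a sum of $p_{i,k}T = e^{\tilde p_{i,k}}T$ (convex in $\tilde p_{i,k}$), $p'_{j,k}T = e^{\tilde p'_{j,k}}T$ (convex in $\tilde p'_{j,k}$), and the linear term $(1-\eta)\sum\sum E_{i\to i',k}$ (affine in the energy-cooperation variables); since $q\ge 0$ in every Dinkelbach iteration, $q\,E_{tot}$ is convex. Thus $V'$ is a sum of convex functions of disjoint blocks of variables, hence jointly convex.

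Next I would check the constraints. The outage constraint \eqref{Poutconstr}, $\Pr_{out,k}\le \Pr_{out,0}$, is a sublevel set of the convex function $\Pr_{out,k}$ just shown to be convex in the $\tilde p$-variables, hence a convex set. For the power constraints: \eqref{eq63} reads $0 < e^{\tilde p_{i,k}} \le p_{max}$, equivalently $\tilde p_{i,k} \le \ln p_{max}$, which is affine (a halfspace); the lower bound $e^{\tilde p_{i,k}}>0$ is automatic. Similarly \eqref{eq631a} gives $\tilde p'_{j,k}\le \ln p_{max}$, affine. The subtle one is the energy-causality constraint \eqref{eq63eu}, $e^{\tilde p_{i,k}}T \le Eu_{i,k}^{ava}$: here I would substitute the recursive definition \eqref{1EnuncS}–\eqref{EuncS} and observe that $Eu_{i,k}^{ava}$, when unrolled, equals a constant ($\sum_{\ell\le k} Eu_{i,\ell}$) plus the affine expression $\sum_{\ell=1}^{k}\big(\eta\sum_{i'}E_{i'\to i,\ell} - \sum_{i'}E_{i\to i',\ell}\big)$ minus $\sum_{\ell=1}^{k-1} p_{i,\ell}T = \sum_{\ell=1}^{k-1} e^{\tilde p_{i,\ell}}T$. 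Therefore \eqref{eq63eu} can be rewritten as $\sum_{\ell=1}^{k} e^{\tilde p_{i,\ell}}T - \big(\text{affine in } \mathbf{E}_{I\to I'}\big) \le \text{const}$, i.e. a sum of exponentials of affine functions minus an affine function on the left, which is convex; so this constraint also defines a convex set. I would also note the implicit nonnegativity $E_{i\to i',k}\ge 0$ is affine.

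Having shown the objective is jointly convex and every constraint defines a convex set whose intersection is the feasible region, $\mathbf{P3}$ is a convex program, and since all constraint functions are either affine or (smooth) convex with a nonempty relative interior under the stated low-outage operating regime, Slater's condition holds, so the interior-point method applies and returns the global optimum. The step I expect to require the most care is the energy-causality constraint \eqref{eq63eu}: one must correctly telescope the recursion for $Eu_{i,k}^{ava}$ to expose that the only nonlinear contributions are the past transmit powers $e^{\tilde p_{i,\ell}}$ (which enter with the ``right'' sign to preserve convexity) and that the energy-transfer terms remain affine — establishing that $Eu_{i,k}^{ava}$ minus the current consumption is concave in the joint variable, equivalently that the constraint is ``convex $\le$ constant.'' The remaining pieces (objective, power caps, outage bound) are routine applications of the composition rules and the positivity of the $c_{ij}$, $c_j$, $q$, and $(1-\eta)$.
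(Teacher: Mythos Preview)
Your proposal is correct and follows essentially the same route as the paper's proof: both arguments show $V'$ is a positively-weighted sum of exponentials of affine functions (from $\Pr_{out,k}$ and from $e^{\tilde p}$ in $E_{tot}$) plus affine terms, handle the outage constraint identically, and---most importantly---unroll the recursion \eqref{1EnuncS}--\eqref{EuncS} so that \eqref{eq63eu} becomes $\sum_{\ell\le k} e^{\tilde p_{i,\ell}}T + (\text{affine in }\mathbf{E}_{I\to I'}) \le \text{const}$, exactly as the paper does in \eqref{constraintcausality}. Your treatment is in fact slightly more explicit than the paper's (you note $q\ge 0$, the implicit $E_{i\to i',k}\ge 0$, and Slater's condition), but the structure and key observation are the same.
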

\begin{proof}
Proof is provided in the Appendix.
\end{proof}

We summarize the overall procedure to solve $\mathbf{P1}$ in Algorithm $1$.

\begin{figure}[h]
\centering
\includegraphics[width=0.5\textwidth]{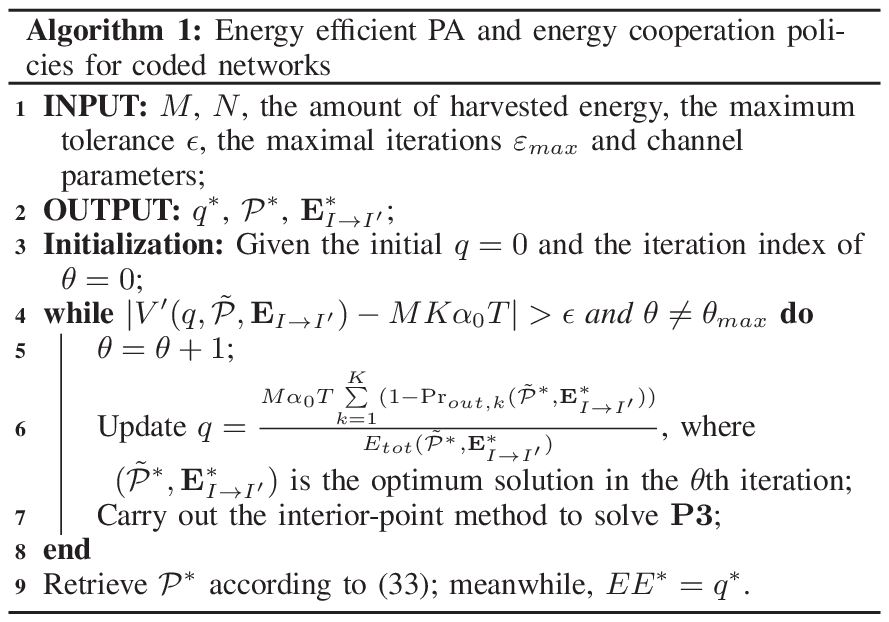}
\end{figure}

\emph{Complexity Analysis:}  With the Brute-force algorithm, the complexity is $\mathcal{O}(\upsilon ^{Num_{Var}})$, where $\upsilon$ is the iteration time for  one variable and determined by the step size. As we can see, the complexity with the Brute-force algorithm increases exponentially with $M$, $N$  and $K$. In contrast, with Dinkebach's method, the iteration time for $q$ is limited \cite{OFDM_nonl}. Furthermore, with the interior-point method applied, the complexity will be $\mathcal{O}(\mathcal{C}_1\mathcal{C}_2)$, where  $\mathcal{C}_1={(Num_{Var} + Z + 1)^{1/2}}$, $\mathcal{C}_2=(Num_{Var} + 1){{Z}^2} + {{Z}^3} + {Num_{Var}^3}$ and $Z$ is the total number of exponential terms in the objective and constraints \cite{interior_complexity}.  It can be found that $\mathcal{O}(\mathcal{C}_1\mathcal{C}_2)$ is a polynomial in $M$, $N$, and $K$.

\emph{Model Extension:} {Our energy harvesting and transferring cooperative networks model can also be extended to more general scenarios.} For example,  relays  may also be capable of harvesting energy externally  and transferring energy to users or  other  relays. In this case,  when formulate the EE maximization problem, we can regard the relays as users and let $U_{M+j}$ represent $R_{j}$. In other words, $U_i$ denotes one relay rather than an user if $i \in \{M+1, M+2, \cdots, M+N\}$. Specifically, the following two minor changes are needed. First, we rewrite \eqref{const_pr}, i.e., the power constraint at relay $R_j$ as
\begin{align}
&0 < p_{i,k}T\leq \min \{{Eu_{i,k}^{ava}}, p_{max}T\}, \hfill \\
&\quad\quad\quad i \in \{M+1, M+2, \cdots, M+N\},  \forall k, \nonumber\hfill
\end{align}
where $p_{M+j,k}$ and ${Eu_{M+j,k}^{ava}}$ are the transmitting power and available energy at $R_j$ in the $k$th transmission period, respectively. 
Second,  ${Eu_{i,k}^{ava}}$ ($i \in \{1, 2, \cdots, M+N\}$) is not only related with the incoming/outgoing energy transferred from/to other users but also the relays. Then the expressions for available energy at sources and relays can be simply obtained by changing ``$M$" in \eqref{1EnuncS} and \eqref{EuncS} into ``$M+N$".  Algorithm $1$ is still feasible for the extended scenarios. On
the other hand, if a part of users are not energy harvesting nodes or cannot transfer energy to other users/relays, then we only
need to delete the energy causality constraint for them. Algorithm $1$ is also applicable to such kind of scenarios.

\section{Numerical Results}

In what follows, we will present numerical results. Energy arrivals are generated randomly and independently. Their specific values are shown in Fig.~\ref{OptimalPolicy}. We simulate the process for $10$ consecutive transmission periods.  We  assume that $M=2$,  $N=4$, $B=125K$Hz, $\alpha_0=10^5$ bits per second and $p_{max}=20$ watts.  We normalize $T$ as $1$. The following  randomly generated values are also assumed,
\begin{small}
\begin{align}
  &{{\mathbf{\Omega }}_{\mathbf{h}}} = \left[ {\begin{array}{*{20}{c}}
  {2.5646}&{1.7520 }&{2.1684 }&{0.5798} \\
  {2.3024}&{0.4753}&{3.5462}&{0.3904}
\end{array}} \right], \hfill \nonumber\\
  &{{\mathbf{\Omega }}_{\mathbf{g}}} = \left[ {\begin{array}{*{20}{c}}
  { 3.3120}&{1.1286}&{0.3284}&{0.7821}
\end{array}} \right], \hfill \nonumber\\
  &{{\mathbf{d}}_{\mathbf{h}}} = \left[ {\begin{array}{*{20}{c}}
  {1565}&{765.2}&{1704.2}&{1530.4} \\
  {1979.6}&{1831}&{720.6}&{173.6}
\end{array}} \right], \hfill\nonumber \\
 &{{\mathbf{\beta}}_{\mathbf{h}}} = \left[ {\begin{array}{*{20}{c}}
  {2.5570}&{2.9150 }&{2.3152}&{3.0143} \\
  {3.0938 }&{2.1298}&{2.6412}&{2.9708}
\end{array}} \right], \hfill \nonumber\\
  &{{\mathbf{d}}_{\mathbf{g}}} = \left[ {\begin{array}{*{20}{c}}
  {471.7}&{1045.7}&{902.2}&{1079.4}
\end{array}} \right], \hfill \nonumber \\
&{{\mathbf{\beta}}_{\mathbf{g}}} = \left[ {\begin{array}{*{20}{c}}
  {2.6103}&{3.2838}&{1.8435 }&{2.3515}
\end{array}} \right], \hfill \nonumber\\
  &{{\mathbf{N}}_{{\mathbf{0}},{\mathbf{h}}}} = 10^{-15}\left[ {\begin{array}{*{20}{c}}
  {0.126}&{0.07}&{0.54}&{0.006} \\
  {0.002}&{0.002}&{0.429}&{1.096}
\end{array}} \right], \hfill \nonumber\\
  &{{\mathbf{N}}_{{\mathbf{0}},{\mathbf{g}}}} = 10^{-15}\left[ {\begin{array}{*{20}{c}}
  {0.3799}&{0.7243}&{0.0265}&{0.1225}
\end{array}} \right]. \hfill \nonumber
\end{align}
\end{small}
Note that ${{\mathbf{\Omega }}_{\mathbf{h}}}$ and ${{\mathbf{\Omega }}_{\mathbf{g}}}$ denote the variance matrices of the average channel gain; ${\mathbf{{N_{0,h}}}}$ and  ${\mathbf{{N_{0,g}}}}$ represent the power spectrum density matrices, which are measured in Watts/Hz. ${\mathbf{{d _h}}}$ and ${\mathbf{{d _g}}}$ represent the distance matrices which are measured in meter. ${\mathbf{{\beta _h}}}$ and ${\mathbf{{\beta_g}}}$ represent the path-loss exponent matrices.  Specifically, elements at the $i$th row and the $j$th column of ${\mathbf{{\Omega _h}}}$, ${\mathbf{{d _h}}}$, ${\mathbf{{\beta_h}}}$ and ${\mathbf{{N_{0,h}}}}$ correspond to the parameters for the $S_i-R_j$ channel. Element at the $j$th columns in ${\mathbf{{\Omega _g}}}$, $\mathbf{{d _g}}$, ${\mathbf{{n _g}}}$ and ${\mathbf{{N_{0,g}}}}$ corresponds to the parameters for the $R_j$-destination channel.

\subsection{Optimal Policy Illustration}

We first take the scenario when $m=1$  (i.e., Rayleigh fading channel), $\eta=0.6$ and $\Pr_{out,0}=6 \times 10^{-7}$ as example. The cumulative harvested energy and the optimal power policies are depicted in Fig.~\ref{OptimalPolicy}.

For the cumulative harvested energy curves, the rising height at the beginning of every transmission period represents the amount of newly harvested energy that can be consumed in that period. Zero-rising height implies that no energy is harvested. From Fig. \ref{OptimalPolicy}, we observe that $U_2$ harvests sufficient energy from the external environment  while $U_1$ harvests small amount of energy and suffers from energy deficiency. Especially in the $2$nd and $3$rd transmission periods, as can be seen, no energy is harvested at $U_1$.

\begin{figure}[h]
\centering
\includegraphics[width=0.5\textwidth]{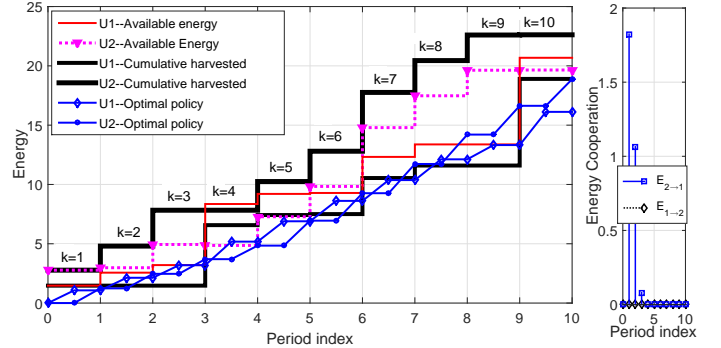}
\caption{Optimal policy and cumulative harvested energy; $\eta=0.6$; $m=1$.}
\centering\label{OptimalPolicy}
\end{figure}

To compensate for $U_1$ such that NC cooperative transmission can be carried out, the energy cooperation policy is adopted. Fig. \ref{OptimalPolicy} shows that $U_2$ respectively transfers $1.8240$, $1.0642$ and $0.08$ joules energy to $U_1$ during the $2$nd, $3$rd and $4$th transmission periods. Take the second transmission period as an example. Since $U_2$ transfers energy to $U_1$ during the $2$nd transmission period, the available energy at $U_2$ is smaller than the cumulative harvested amount. In contrast, due to the additional incoming energy from $U_2$, the available energy at $U_1$ exceeds the cumulative harvested amount.

For the optimal policy curve, the slope of one line segment represents the transmitting power in the corresponding transmission period. Zero-slope represents that no energy is consumed and no transmission proceeds.  We can observe that due to the TDMA transmission scheme, the zero-slope line segments in the optimal policy curves of $U_1$ and $U_2$ alternately appear. Moreover, as can be seen, the optimal power policy curves of  $U_1$ and  $U_2$ in Fig. \ref{OptimalPolicy} are not higher than the available energy curves due to the energy causality constraint. 

Additionally, in some transmission periods, the optimal consumed energy amount is not necessarily the same with the available amount. In other words, the energy is not depleted and some is saved and will be consumed later for the sake of maximizing the EE. Take $U_1$ as an example, the available energies are not used up till the end of the $3$rd and $9$th transmission periods. Similar conclusions can be obtained for $U_2$. Specially, though no harvested energy or cooperation energy from $U_1$ in the $10$th transmission period, data transmission still proceeds at $U_2$ and the outage probability threshold is satisfied, which benefits from its cumulative harvested energy in the prior transmission periods, which is not achievable with the policy in \cite{MGL}.

\subsection{Impacts of the Relay Locations}
To investigate the impact of relay locations on the EE, we fix the distance between the users and destinations but move the relays  between the users and  the destination. To be specific, the distance between $U_i$ ($\forall i$),  and $R_j$ ($\forall j$), is changed into $({d_{{ij}}}+\Delta)$ meters while the distance between $R_j$ and the destination is reduced into $({d_{j}}-\Delta)$ meters, where $\Delta$ is the shifting distance of one relay.

\begin{figure}[h]
\centering
\includegraphics[width=0.4\textwidth]{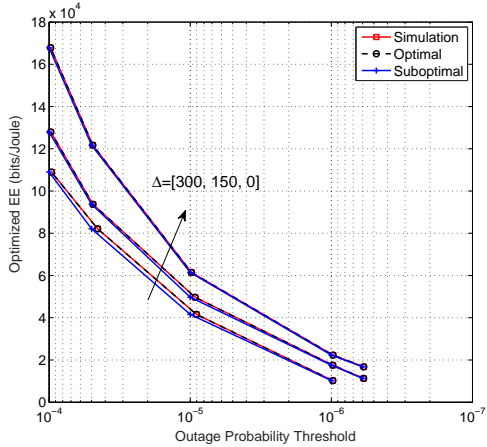}
\caption{Effects of relay locations; $\eta=0.6$; $\Pr_{out,0}=[10^{-4}, 5\times10^{-5}, 10^{-5}, 10^{-6}, 6\times10^{-7}]$; $m=1$.}
\centering\label{effects_relaylocations}
\end{figure}

In Fig. \ref{effects_relaylocations} and Fig. \ref{effects_relaylocationsM3}, the tradeoff curves between ${U_{EE}}$ and the predefined outage threshold, $\Pr_{out,0}$ for $\Delta=0, 150, 300$ cases are depicted. The channels are assumed to be either LOS (e.g., $m=3$) or NLOS (e.g., $m=1$) ones. The numerical results are obtained by  Algorithm $1$, the Brute-force algorithm  and simulations, respectively. The simulation results are obtained by respectively averaging the outage probability, total consumed energy and EE over $10^9$ random realizations of the fading channels. As can be seen, for both NLOS and LOS channel scnearios, their analytical results obtained from Algorithm $1$ closely match the results from the Brute-force algorithm  and simulation, especially in the low $\Pr_{out,0}$ region where higher SNR is needed. All these show  that the analytical results obtained by Algorithm $1$ are valid.  

We can also observe that EE decreases with the pre-defined outage probability threshold, which implies that the decrease in the outage probability threshold can cost significant EE penalty.

\begin{figure}[h]
\centering
\includegraphics[width=0.4\textwidth]{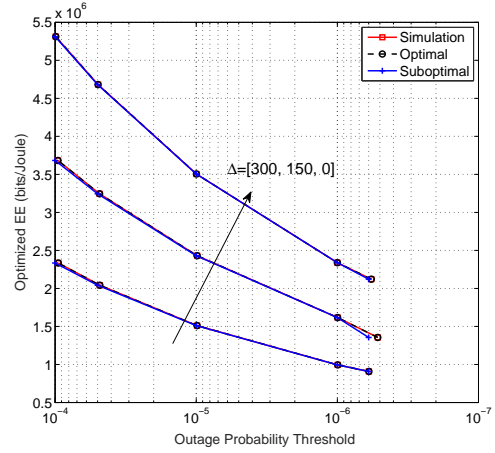}
\caption{Impacts of relay locations; $\eta=0.6$; $\Pr_{out,0}=[10^{-4}, 5\times10^{-5}, 10^{-5}, 10^{-6}, 6\times10^{-7}]$; $m=3$.}
\centering\label{effects_relaylocationsM3}
\end{figure}

In Fig. \ref{effects_relaylocations}, the gaps among the tradeoff curves demonstrate the EE loss resulting from  increasing $\Delta$ in the NLOS channel environment. It can be noticed that $12\%$ and  $25\%$ EE losses are respectively generated for the cases when $\Delta=150$ and $300$ meters.  $\Pr_{out,0}=6 \times 10^{-7}$ is even not achievable when $\Delta=300$. Similar conclusions can also be obtained for the LOS scenario, as shown in Fig. \ref{effects_relaylocationsM3}. It is because the first hop transmission is dominant in the two-hop transmission scheme. The increase in the transmission distance of the first hop deteriorates the outage probability performance, which needs more energy in the second hop to compensate and results in lower EE.

In Fig. \ref{effects_relaylocationsM3},  we obtian  the optimum EE when  $\eta=0.6$ in the LOS channel environment (e.g., $m=3$). As can be observed, its EE is around 20 times that of NLOS scenarios.  Additionally,  in contrast to the NLOS scenarios, the LOS results show that, even for the strictest outage probability requirement (i.e., when  $\Pr_{out,0}=6 \times 10^{-7}$), no energy cooperation is needed among the users. Thus, energy loss is avoided during the wireless energy transferring.  This advocates the rationale since if $m$ increases, the channels become more advantageous for data transmission. In other words, less power is needed to meet a specific target outage probability level, which is also clearly revealed in \eqref{eq222}. Hereby, the increase of $m$ results in a higher EE. 

\subsection{Impacts of the Energy Transmission Efficiency}

In Fig. \ref{effects_transmissioneffiencyd0}, the EE curves for the scenario when $\Delta=150$ and $\eta=0.2$, $0.6$ and $1$ are plotted. It is clear that for the same $\Pr_{out,0}$, more EE losses are caused when $\eta$ takes a smaller value. Moreover, $\Pr_{out,0}=6 \times 10^{-7}$ is not achievable when $\eta=0.2$ due to the significant energy losses during the energy cooperation. Note that three curves overlap when $\Pr_{out,0}=1 \times 10^{-4}$ due to the fact that no energy cooperation is needed to satisfy the outage probability threshold. Such numerical results give references on the system parameter settings.

\begin{figure}[h]
\centering
\includegraphics[width=0.4\textwidth]{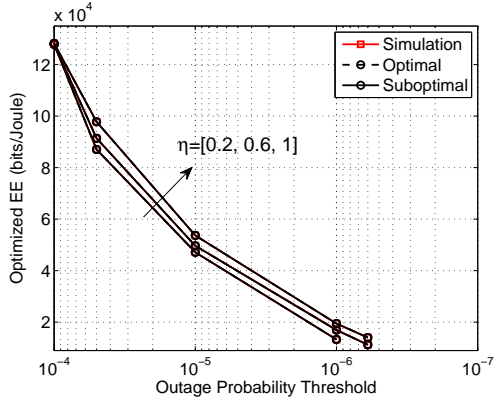}
\caption{Impacts of wireless energy transmission efficiency; $\Delta=150$; $\Pr_{out,0}=[10^{-4}, 5\times10^{-5}, 10^{-5}, 10^{-6}, 6\times10^{-7}]$; $m=1$.}
\centering\label{effects_transmissioneffiencyd0}
\end{figure}

\subsection{Performance comparison of different transmission schemes}

For comparison, in Fig.~\ref{Performance_Comparison}, we provide close-to-optimal results obtained by our proposed algorithm for the scenario without network coding (NoNC) \cite{MMB}. In the NoNC scenario, decode-and-forward (DF) relaying protocol is adopted at  $N$ relays.  It is shown  that the EE of  the NC scenario is more than $30\%$ higher than that of the NoNC scenario, which demonstrates that considerable EE gains can be achieved with NC.

\begin{figure}[h]
\centering
\includegraphics[width=0.4\textwidth]{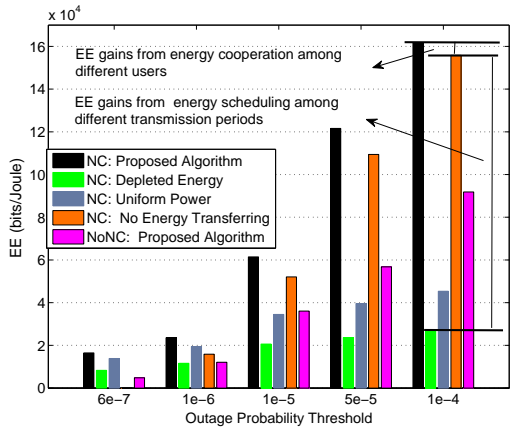}
\caption{EE performance comparison for different transmission strategies; $\eta=0.6$; $\Delta=0$; $m=1$.}
\centering\label{Performance_Comparison}
\end{figure}

{Moreover, for the coded scenario,  EE obtained with another three transmission strategies is also provided. First, in the ``No Energy Transferring" scheme,  there is no energy cooperation among the users but the energy can be schuduled over different transmission periods. The EE gap between ``No Energy Transferring"  and our proposed algorithm $1$ indicates the EE gain obtained from the energy cooperation. Specifically, $46.5\%$, $18\%$, $11\%$ and $3.7\%$ EE gains can be obtained when  $\Pr_{out,0}=10^{-6}$, $10^{-5}$, $5\times10^{-5}$, $10^{-4}$, respectively. It can be observed that $\Pr_{out}\leq 6\times10^{-7}$ is even not achievable if no energy cooperation. Additionally, as can be seen, with the increase of $\Pr_{out,0}$, the above EE gains decrease. This is because less energy from other users is needed  if $\Pr_{out,0}$ increases. Correspondingly, the energy cooperation advantages fade.}

For ``Depleted Energy" scheme adopted  in \cite {MGL}, the harvested energy at every user is used up within every transmission period. The EE gaps between ``No Energy Transferring"  and ``Depleted Energy" schemes show the gains from energy scheduling among different transmission periods. Specifically, around $37\%$, $153\%$, $360\%$ and $484\%$ EE gains can be obtained when $\Pr_{out,0}=10^{-6}$, $10^{-5}$, $5\times10^{-5}$, $10^{-4}$, respectively. It is clear that the gains increase with ${\Pr}_{out,0}$ since in larger ${\Pr}_{out,0}$ case, less energy is needed for data transmission and more energy shall be saved. Depleting energy will definitely lead to a lower EE.

On the other hand, in the ``Uniform Power policy", all sources transmit with the same power obtained by averaging all the harvested energy in $K=10$ transmission periods among the two users. Note for comparison, in the ``Uniform Power policy",  the power at relays refers to the  results obtained with our proposed  algorithm $1$. Moreover, the outage probability threshold cannot be guaranteed. Thus the outage probability requirement is removed. It is shown that our algorithm  outperforms the ``Uniform Power policy" scheme. 
 
To conclude, network coding, energy scheduling among different transmission periods and energy cooperation among different users can provide a notable EE improvement.
 
\section{Conclusions}

We have studied the energy harvesting and wireless energy transferring networks that was coded over finite field. Energy management including determining the optimal power and energy cooperation policies was conducted over  consecutive transmission periods and under the independent but not necessarily identically distributed (i.n.i.d.) Nakagami-$m$ channel environment. The energy efficiency  was maximized under the constraints of the energy causality and outage probability constraints.  With the geometric and fractional programming, the optimization problem was converted into a convex one. The efficient interior-point method was applied to achieve close-to-optimal solutions. The gap between our optimal policy and the decode-and-forward relaying scenario showed the notable energy efficiency gains from the network coding. Additionally, for the network coding scenario, our suboptimal policy outperformed  the   ``No Energy Transferring", ``Depleted Energy"   and  ``Uniform Power" policies. It was shown that the harvested energy was not necessarily depleted and part of the energy was saved for usage in the later transmission periods or  transferred to its cooperative partners. Finally, it was revealed that both the increase of the transmission distance in the first hop and wireless power transmission inefficiency resulted in a degraded energy efficiency.

\section*{Appendix}\label{appendixconvexproof}

We first prove the convexity of the objective function. In every iteration of $q$,
\begin{align}
V'&=M{\alpha_0}{\sum\limits_{k = 1}^K }{{\Pr}_{out,k}}+q{E_{tot}}+q{K }T\sum\limits_{j = 1}^N {{c_j}}\nonumber\hfill\\
&=M{\alpha_0}{ }{\sum\limits_{k = 1}^K }{{\Pr}_{out,k}} +q\cdot\nonumber\hfill\\
&\sum\limits_{k = 1}^K{\left( {\sum\limits_{i = 1}^M {{e^{{{\tilde p}_{i,k}}}T} + (1 - \eta )\sum\limits_{i = 1}^M \sum\limits_{i' = 1}^M {{E_{i \to i',k}}}  +  \sum\limits_{j = 1}^N {{c_j}{e^{{{\tilde p'}_{j,k}}}}T} } } \right)}\nonumber\\
& >0 .\label{Vprime}
\end{align}
The first item in \eqref{Vprime} is the sum of multiple exponential terms multiplied by positive constants and thus convex \cite{24}. Meanwhile, the second item is obvious convex. Then $V'$ is convex. The proof of the convexity property of \eqref{Poutconstr} follows the same approach.

For constraint \eqref{eq63}, we separate it into two inequations, i.e.,

\begin{equation}
\sum\limits_{l = 1}^k {{e^{{{\tilde p}_{i,l}}}T}}  + \sum\limits_{l = 1}^k {\sum\limits_{i' = 1}^M {{E_{i \to i',l}}} }  - \eta \sum\limits_{l = 1}^k {\sum\limits_{i' = 1}^M {{E_{i' \to i,l}}} }  \leqslant \sum\limits_{l = 1}^k {E{u_{i,l}}}, \label{constraintcausality} \end{equation}

\begin{equation}
0< {{e^{{{\tilde p}_{i,k}}}}}\leq p_{max}\label{constraintcausality2}.
\end{equation}

\eqref{constraintcausality} is  convex \emph{w.r.t } to $\mathcal{P}^*$ and ${\mathbf{E}_{I \to I',k}}$  ($\forall k$), due to the fact that the first item in the left side of \eqref{constraintcausality} is convex and the other items are linear. It is obvious that \eqref{eq631a} and \eqref{constraintcausality2} are convex.

The convexity of $\mathbf{P3}$ is proved.

\balance

\end{document}